\newtheorem{theorem}{Theorem}
\newtheorem{define}{Definition}
\newtheorem{lemma}{Lemma}
\newtheorem{assume}{Assumption}
\newcommand{\dif}{\mathrm{d}}
\DeclareMathOperator*{\argmax}{argmax}
\title{Characterization of Incentive Compatibility of an Ex-Ante Constrained Player}
\author {
    Bonan Ni,
    Pingzhong Tang 
}
\begin{document}

\maketitle

\begin{abstract}

We consider a variant of the standard Bayesian mechanism, where players evaluate their outcomes and constraints in an {\em ex-ante} manner. Such a model captures a major form of modern online advertising where an advertiser is concerned with her/his expected utility over a time period and her/his type may change over time. We are interested in the incentive compatibility (IC) problem of such Bayesian mechanism. Under very mild conditions on the mechanism environments, we give a full characterization of IC via the {\em taxation principle} and show, perhaps surprisingly, that such IC mechanisms are fully characterized by the so-called {\em auto-bidding} mechanisms, which are pervasively fielded in the online advertising industry.


\end{abstract}

\section{Introduction}

In online advertisement industry, where a publisher sells advertisement slots to some advertisers, it is typical for each advertiser to have her own account which records her remaining budget, the amount of advertisement slots allocated so far and possibly the amount of conversions generated through her advertisements. Slot auctions happen at almost the same frequency as user requests, which makes it almost impossible (and indeed unnecessary) for advertisers to evaluate utility of each auction separately. As a result, in industry, certain kinds of contracts are signed between the platform and the advertiser to guarantee overall allocation and payment over a relatively long time period, typically one day or one week. For example, one popular form of contract is to limit the total amount of money spent on advertising, which is called budget constraint in the related literature. Another popular contract form limits the return-over-investment ratio (ROI), and examples of this form include Google's target CPA bidding and Facebook's cost cap. 

In the probability space consisting of all auction events over the given time period, the constraints imposed by such contracts are expressed in an ex-ante manner. That is, satisfaction of the constraint is evaluated by taking the expectation over the whole probability space. Most constraints in classic mechanism design literature, however, are {\em interim} constraints, which are separately evaluated for each possible type of each player. When an ex-ante constraint on overall expectation is binding, it is possible for the player to design more complicated misreport strategy: instead of separately maximizing the utility for each of her own type, the player can make contingent plans about the type report over her whole type space, so that her constraint remains satisfied, but her utility is overall better off than always reporting truthfully. Such possibility poses new challenges to the design of incentive compatible mechanisms.

In this paper, we analyze the general mechanism design problem with an ex-ante constraint on the outcome and payment, which is not limited to the design of single-item auctions. From a fixed player's perspective, the mechanism is abstracted into a pair of interim rules (but with ex-ante utility model and constraints), and we characterize interim rules of any incentive compatible mechanism. Our first characterization in Section \ref{char-1} states that, it is without loss of generality to use the following implementation of incentive compatible mechanism: the designer maximizes some new value-minus-payment utility on behalf of the player for every type realization of the mechanism, where the new value is defined by linearly combining a utility term and a constraint term using some factors that are fixed across all type realization of the mechanism. For example, when the player is a value maximizer with budget constraint, it suffices to determine factors $(\alpha, \beta)$ and maximize $\alpha$ times value minus $\beta$ times payment. This characterization justifies the auto-bidding technique in online advertising, where an advertiser's evaluations for the advertisement slot's value and contribution to constraint are combined linearly to give input to an auction. Our second characterization, given in Section \ref{char-2}, shows that incentive compatibility can be characterized using a multi-dimensional payment identity, and the new payment identity differs from the one in classic quasi-linear mechanism design literature \cite{10.2307/3689266,MANELLI2007153} only by linearly combining the utility and constraint terms.

\section{Related works}\label{relate-work}

Many recent works study the the mechanism design problems for players with various ex-ante constraints, typically under the context of online advertisement auction. Unlike the widely used constraint of individual rationality which is separately evaluated for every possible type of the player (interim IR) \cite{10.1145/2940716.2940775} or every profile of all players' types (ex-post IR) \cite{Ashlagi2019SequentialMW}, an ex-ante constraint is evaluated by taking expectation over the whole game's probability space. Two of the most widely discussed ex-ante constraints are budget constraint \cite{10.1145/3465456.3467578,10.1145/1367497.1367747}, which is to limit the overall expectation of payment (the term `budget constraint' is also used for interim or ex-post constraint on payment \cite{principal-agent}, and the budget level can be private \cite{10.1111/1467-937X.00033}, but in this paper we focus on an ex-ante constraint known to the designer), and ROI constraint, which is to limit the ratio given by the player's expected utility over her expected payment \cite{10.1145/3465456.3467607,10.1145/3442381.3449841,10.1145/3338506.3340269}. Given the constraint, the player seeks to maximize her utility without violating the constraint. In many works the player's utility is still given by the standard quasi-linear utility, which is the value she 
wins from the outcome minus payment. Another player model which becomes increasingly popular assumes that the player's utility is just the value she wins, and the trade-off between value and payment is completely controlled by the ex-ante constraint \cite{Babaioff2021NonquasilinearAI}. This model plus a budget constraint resembles the classic model of Fisher's market \cite{https://doi.org/10.1111/j.1536-7150.2005.00349.x}, where the player maximizes the value of items she wins, under the constraint that the money spent on the items does not exceed a given limit.

To solve the mechanism design problem with ex-ante constrained players, the technique called auto-bidding is widely proposed \cite{10.1145/3465456.3467607,10.1007/978-3-030-35389-6_2,10.1145/3442381.3450052,DBLP:journals/corr/abs-2106-06224,li2020incentive}. Typically, the mechanism is implemented by running an auction which repeatedly asks for a single bid from the player, and decide allocation and payment according to the bid. Auto-bidding repeatedly decides the bid on behalf of the player, so that the total utility over multiple rounds is maximized without violating the constraint. An important result is that, under certain settings, the optimal bidding rule is to always linearly combine the current item's contribution to utility and contribution to the constraint using some factors that are fixed across all rounds. Many previous works obtain this result under different restrictions: \cite{10.1007/978-3-030-35389-6_2} assumes finite rounds of game and a posted price auction, and \cite{10.1145/3465456.3467607,Babaioff2021NonquasilinearAI} assumes specific forms of the player's utility and the ex-ante constraint. Some other works capture the sequential revelation of players' types, and design dynamic mechanisms or control strategies that achieve low regret over multiple rounds  \cite{golrezaei2021bidding,regret2019}.

Another literature investigates the technique called pacing. Pacing can be similarly seen as certain control technique deciding input to a repeated auction, but its goal is often to smooth the usage of the player's budget over a certain time period \cite{Agarwal2014BudgetPF}. Smoothing the budget usage does not necessarily involve the maximization of the player's utility, which makes it  less natural to define the corresponding incentive compatibility problem. However, some works define pacing as trying to use up the budget by uniformly scaling the bid \cite{Conitzer2018MultiplicativePE,inproceedings}, which  actually maximizes the player's utility given some certain player models.

\section{Model}\label{mechanism-define}
To study the incentive compatibility of the mechanism, we focus our attention to an arbitrarily chosen player and analyze the incentive problems from her own perspective, without concerning any component of the mechanism that does not affect the chosen player's action. Methodologically, all analysis in the paper considers the mechanism's interim rules with respect to a single fixed player. When there are multiple players, all the results in this paper can be applied to each of them, so that each player has her corresponding incentive compatibility characterization about her interim rules. 

Denote by $V \subseteq \mathbb{R}^d$ the player's type space, by $Q \subseteq \mathbb{R}^D$ the mechanism's outcome space. When analyzing incentive compatibility, it is without loss of generality to consider only direct mechanisms, which take the player's type report as the only input from the player. Therefore, from the player's perspective, we consider a pair of interim rules $(x, p)$ as the mechanism's abstraction, where $x: V \rightarrow Q$ denotes the mapping from type report to outcome, and $p: V \rightarrow \mathbb{R}$ denotes the mapping from type report to expected payment. The outcome space $Q$ are chosen so that both $x$ and $p$ are deterministic functions: any possible lottery over outcomes must be encoded by a corresponding single point in $Q$. For example, if we choose $Q = [0,1]^2$, any possible lottery to allocate two items to the player can be encoded by a single point in $Q$.

Denote by $\Delta V$ the set of Borel probability measures on $V$. The player's type distribution is given by some probability measure $\rho \in \Delta V$. We say that a subset $E \subseteq V$ has positive measure if $\rho(E)>0$, and a condition holds for almost every $v$ if every subset where the condition fails does not have positive measure. Unless otherwise specified, all expectations in this paper are taken with respect to $\rho$.

Denote by $s: V \rightarrow \Delta V$ a strategy the player can take. The player model is given by the combination of a utility function and an ex-ante constraint, and is encoded by tuple $(u, f, c_1, c_2, C)$. We assume that the player is risk-neutral, so that the utility is also evaluated by taking ex-ante expectation. The utility extracted from the outcome is given by function $u: Q \times V \rightarrow \mathbb{R}$, and overall utility is given by the ex-ante expectation of utility from outcome minus a constant $c_1 \in \{0,1\}$ times payment. Given interim rules $(x,p)$ and report strategy $s$, the ex-ante utility is given by 
\begin{equation} \label{utility}
    U(x, p, s) \coloneqq \mathbb{E}_{v, v' \sim s(v)} [u(x(v'), v) - c_1 \cdot p(v')].
\end{equation}
Note that for any $c_1 > 0$, we can rescale $u$ into $u' = u/c_1$, so that maximizing $\mathbb{E}[u(x(v'), v) - c_1 \cdot p(v')]$ is equivalent to maximizing $\mathbb{E}[u'(x(v'), v) - p(v')]$. Therefore, it is without loss of generality to consider $c_1 \in \{0,1\}$ instead of $c_1 \ge 0$.

The {\em ex-ante constraint} is decided by function $f: Q \times V \rightarrow \mathbb{R}$, constant $c_2 \in \{0, 1\}$ and constant $C \in \mathrm{R}$. The constraint is given by:
\begin{equation} \label{constraint}
    \mathbb{E}_{v, v' \sim s(v)} [f(x(v'), v) - c_2 \cdot p(v')] \ge C.
\end{equation}
And we say that the ex-ante constraint is binding when reporting truthfully, if 
\[
    \mathbb{E}_{v} [f(x(v), v) - c_2 \cdot p(v)] = C
\]

Throughout this paper, we only analyze strategy $s$ for which the expectation in equation (\ref{utility}) and (\ref{constraint}) exist. In fact, it is sufficient to consider strategies which are some random lotteries over a finite number of deterministic report rules $h: V \rightarrow V$, with each $h$ being measurable. 

An ex-ante constraint can be thought of as a contract between the mechanism designer and the player, so that even without any prior distributional knowledge, the player is convinced that the overall outcome and payment are under some specific control. Our definition of incentive compatibility requires that the constraint is always satisfied by truth-telling, therefore an alternative interpretation is that when violating the constraint, the player's utility is $-\infty$. This player model covers a wide range of buyer models in online advertising and auto-bidding literature, including:
\begin{itemize}
    \item Value maximizer with ROI constraint: letting $c_1=0, f=u / (1+\gamma), c_2 = 1, C=0$, where $\gamma$ is the lowest acceptable ROI for the buyer.
    \item Utility maximizer with budget constraint: letting $c=1, f=0$, $c_2 = 1$, and $-C$ be the buyer's budget limit.
\end{itemize}
Also, the model naturally covers the buyer of a multi-item auction. With $D$ equal to the number of items and $d=2^D$, we can use each dimension of $V$ to represent the player's valuation for a subset of the items, and each dimension of $Q$ to represent the allocation probability of an item.

We make the following technical assumptions on $V$, $Q$, functions of the player model $u, f$, and the interim rules $(x,p)$.
\begin{assume} \label{continuity}
    $V$ is a compact subset of $\mathbb{R}^d$, and $Q$ is a compact subset of $\mathbb{R}^D$. $u, f, x, p$ are all continuous.
\end{assume}
By assuming compact $V,Q$ and continuous $u, f, x, p$, we make the values of $u, f, p$ bounded. 

We focus on deriving characterizations of the mechanism's incentive compatibility. The incentive compatibility for the player with ex-ante constraint is defined as follows:
\begin{define}
    An interim rule pair $(x,p)$ is {\em incentive compatible}, if 
    \[
        \mathbb{E}_{v} [f(x(v), v) - c_2 \cdot p(v)] \ge C,
    \]
    AND, for any strategy $s: V \rightarrow \Delta V$ s.t. the expectations $U(x,p,s)$ and $\mathbb{E}_{v, v' \sim s(v)} [f(x(v'), v) - c_2 \cdot p(v')]$ exist, at least one of the following two inequalities hold:
    \begin{enumerate}
        \item $\mathbb{E}_{v} [u(x(v), v) - c_1 \cdot p(v)] \ge U(x, p, s)$.
        \item $\mathbb{E}_{v, v' \sim s(v)} [f(x(v'), v) - c_2 \cdot p(v')] < C$.
    \end{enumerate}
\end{define}
As explained in the beginning of the section, we abstract the mechanism into a pair of interim rules $(x,p)$. For incentive compatible interim rule pair $(x,p)$, reporting truthfully does not violate the ex-ante constraint, and maximizes utility among all report strategies that does not violate the constraint.

\section{Auto-bidding and taxation principle}\label{char-1}
Following the idea of taxation principle, we give our first characterization of incentive compatibility. Denote by $Q_x \subseteq Q$ the range of $x$. The classic taxation principle tells that, incentive compatibility mechanisms can be implemented by designing payment rule that is some mapping from $Q_x$ to $\mathbb{R}$, and decide outcome by choosing the optimal outcome in $Q_x$ on behalf of the player. Theorem \ref{taxation} tells that, a similar implementation can be used for player with an ex-ante constraint:
\begin{theorem} \label{taxation}
    Under Assumption \ref{continuity}, interim rule pair $(x,p)$ is incentive compatible if and only if at least one of the following two conditions holds:
    \begin{enumerate}
        \item The ex-ante constraint is binding when reporting truthfully, and there exists function $p_x: Q_x \rightarrow \mathbb{R}$ s.t. for almost every $v \in V$, 
        \[
            p_x(x(v)) = c_2 \cdot p(v),
        \]
        and for almost every $v \in V$,
        \begin{equation}
            x(v) \in \argmax_{q \in Q_x} [f(q, v) - p_x(q)],
        \end{equation}
        and if there exists $v' \ne v$ s.t. $f(x(v), v) - p_x(x(v)) = f(x(v'), v) - p_x(x(v'))$, then we have $u(x(v), v) - c_1 \cdot p(v) \ge u(x(v'), v) - c_1 \cdot p(v')$.
        \item There exists constant $r \in \mathbb{R}_{\ge 0}$ and function $p_x: Q_x \rightarrow \mathbb{R}$ s.t. for almost every $v \in V$, 
        \[
            p_x(x(v)) = (c_1+r c_2)p(v),
        \]
        and for almost every $v \in V$, 
        \begin{equation}\label{taxed-maximization}
            x(v) \in \argmax_{q \in Q_x} [u(q, v) + r \cdot f(q, v) - p_x(q)],
        \end{equation}
        and if $r>0$, the ex-ante constraint is binding when reporting truthfully.
    \end{enumerate}
\end{theorem}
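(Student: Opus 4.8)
The plan is to recast incentive compatibility as a single constrained optimization problem and apply a convex separation (a Fritz--John/Lagrangian argument), then translate the resulting multiplier into the taxation form via the standard ``swap'' argument. Concretely, to each admissible strategy $s$ I associate the pair $(U_s, F_s) \in \mathbb{R}^2$, where $U_s = U(x,p,s)$ and $F_s = \mathbb{E}_{v,v'\sim s(v)}[f(x(v'),v) - c_2 p(v')]$, and let $\mathcal{R} \subseteq \mathbb{R}^2$ be the set of all such pairs. Because a coin-flip between two strategies realizes the corresponding convex combination of their pairs (both coordinates are linear in the mixing), $\mathcal{R}$ is convex; write $(U_*, F_*)$ for the pair of truth-telling. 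First I would record the elementary consequence of convexity that if the constraint is slack ($F_* > C$) then IC forces $(U_*, F_*)$ to maximize the first coordinate over all of $\mathcal{R}$, placing us directly in Condition 2 with $r = 0$.

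For the binding case $F_* = C$, the key step is separation: the open quadrant $G = \{(U,F): U > U_*,\ F > C\}$ is disjoint from $\mathcal{R}$ by IC, so the separating-hyperplane theorem yields $(\lambda_0, \lambda_1) \ge 0$, not both zero, with $\lambda_0 U_s + \lambda_1 F_s \le \lambda_0 U_* + \lambda_1 F_*$ for every $s$. If $\lambda_0 > 0$ I normalize $r = \lambda_1/\lambda_0 \ge 0$ and obtain that truth-telling maximizes the Lagrangian $U_s + r F_s$; if $\lambda_0 = 0$ then $\lambda_1 > 0$ and truth-telling maximizes $F_s$, which will yield Condition 1. Either way I would convert ``truth-telling maximizes an expected objective over all strategies'' into pointwise optimality for almost every $v$: if optimality failed on a positive-measure set, a measurable selection of the pointwise argmax (available by the measurable maximum theorem, since $V$ is compact and all data continuous) would define a profitable deviation, a contradiction. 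Boundedness from Assumption \ref{continuity} guarantees all relevant expectations exist.

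With pointwise optimality in hand, the taxation form follows from the swap argument. For two types $v_1, v_2$ with $x(v_1) = x(v_2)$, pointwise optimality at each type against reporting the other cancels the $u$ and $f$ terms and leaves $(c_1 + r c_2)p(v_1) = (c_1 + r c_2)p(v_2)$ (respectively $c_2 p(v_1) = c_2 p(v_2)$ in the $\lambda_0=0$ case), so $p_x$ is well defined on $Q_x$ by $p_x(x(v)) = (c_1 + r c_2)p(v)$ off a single null set; substituting $p_x$ into the pointwise maximization over reports turns it into the maximization over $q \in Q_x$ demanded by (\ref{taxed-maximization}). In the degenerate case I additionally argue that, since $F_* = C$ is the global maximum of $F$, every feasible deviation is itself $F$-maximizing, so a second interchange (now optimizing $U$ among the $F$-optimal reports) produces exactly the stated utility tie-breaking rule. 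The converse direction is the easy one: summing the pointwise inequality $[u(x(v),v) - c_1 p(v)] + r[f(x(v),v) - c_2 p(v)] \ge [u(x(v'),v) - c_1 p(v')] + r[f(x(v'),v) - c_2 p(v')]$ against any strategy gives $U_s \le U_* + r(F_* - F_s)$, and binding together with $r \ge 0$ yields $U_s \le U_*$ on feasible deviations; feasibility $F_* \ge C$ is immediate from the binding hypothesis, and Condition 1 is handled by the analogous constraint-then-utility comparison.

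I expect the main obstacle to be the forward direction's passage from the single global optimality of truth-telling to the almost-everywhere pointwise characterization, in particular treating the abnormal multiplier case $\lambda_0 = 0$ where the utility term disappears from the primary maximization and must be recovered only as a secondary tie-break. Getting the measurable-selection and null-set bookkeeping right, so that $p_x$ is genuinely well defined off a single null set, is the delicate part, whereas the convexity of $\mathcal{R}$, the separation, and the converse computations are routine.
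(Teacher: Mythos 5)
Your proposal is correct in substance but reaches the paper's characterization by a genuinely different route. The paper never forms the achievable set $\mathcal{R}$ or invokes a separation theorem; instead it proves an intermediate pointwise characterization (its Lemma \ref{ic-iff-new}) by a hand-crafted argument: it defines, for each $r\ge 0$, the sets $V^+(r),V^-(r)$ of types admitting deviations that beat the $r$-Lagrangian in the two constraint directions, exploits monotonicity of $\rho(V^\pm(r))$ in $r$, takes the threshold $r_0=\inf\{r:\rho(V^+(r))=0\}$, and either concludes the pointwise Lagrangian inequality at $r_0$ or finds $r^*$ where both sets have positive measure, in which case a two-piece mixed deviation (with mixing weights $\mu_1,\mu_2$ chosen so the constraint change cancels) contradicts IC. Your separating-hyperplane argument on the convex set $\mathcal{R}\subseteq\mathbb{R}^2$ produces the multiplier $(\lambda_0,\lambda_1)$ in one stroke, and the normal/abnormal dichotomy $\lambda_0>0$ versus $\lambda_0=0$ maps exactly onto the theorem's Condition 2 versus Condition 1; note that the paper's $\mu_1,\mu_2$ mixture is implicitly the same convexity that you make explicit, so the two proofs are cousins, but yours is shorter and conceptually cleaner, while the paper's stays elementary and self-contained (no selection or separation theorems). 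Both proofs then coincide in the endgame: descend from aggregate to almost-everywhere pointwise optimality via a measurable deviation, and build $p_x$ by the swap argument (the paper uses $p_x(q)=\inf_{v:x(v)=q}(c_1+rc_2)p(v)$). One point deserves emphasis: in your $\lambda_0=0$ branch, the profitable deviation must preserve the value of the $f$-term \emph{exactly} (ties), so the paper's finite-subcover Lemma \ref{exist-integral}, which only handles strictly positive gaps, does not apply there; you genuinely need the measurable maximum theorem (argmax correspondence of the $f$-term, then a measurable $u$-maximizing selection over it), which you correctly anticipate --- indeed this is the one step where the paper's own proof (its ``Case 1'') is arguably the least rigorous, so spelling out that selection would be a real improvement rather than mere bookkeeping.
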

Theorem \ref{taxation} can be seen as a generalization of the widely used auto-bidding technique: the game runs by offering a price $p_x(q)$ for every possible outcome $q \in Q_x$, and to report optimally on behalf of the player, the controller either maximizes contribution to constraint per round $[f(\cdot, v) - p_x(\cdot)]$, or maximizes a goal which is some fixed linear combination of the utility and the contribution to constraint. We show in Theorem \ref{ic-iff} that, under some additional assumptions, the corresponding interim rules are characterized by an equality similar to the multi-dimensional version of standard payment identity, which generalizes the discovery in auto-bidding literature, that the underlying auction mechanism should be truthful for quasi-linear players.

To prove Theorem \ref{taxation}, we first provide some useful lemmas. For any $r \ge 0$, define set $V^+(r)$ and $V^-(r)$ by
{\small
\begin{align*}
    V^+(r) \coloneqq & \big \{v |  v \in V, \exists v' \in V:\\
    & f(x(v'), v) - c_2 \cdot p(v') < f(x(v), v) - c_2 \cdot p(v), \\
    & [u(x(v'), v) - c_1 \cdot p(v')] - [u(x(v), v) - c_1 \cdot p(v)] \\
    & > -r \cdot [f(x(v'), v) - c_2 \cdot p(v')] \\
    & + r \cdot [f(x(v), v) - c_2 \cdot p(v)]\big \} \\
     V^-(r) \coloneqq & \big \{v | v \in V, \exists v' \in V:\\
    & f(x(v'), v) - c_2 \cdot p(v') > f(x(v), v) - c_2 \cdot p(v), \\
    & [u(x(v'), v) - c_1 \cdot p(v')] - [u(x(v), v) - c_1 \cdot p(v)] \\
    & >  -r \cdot [f(x(v'), v) - c_2 \cdot p(v')] \\
    & + r \cdot [f(x(v), v) - c_2 \cdot p(v)]\big \}
\end{align*}
}%

The following lemma tells that, such subsets of $V$ are measurable.
\begin{lemma} \label{measurable}
    Under Assumption \ref{continuity}, for any $r \ge 0$, $V^+(r)$ and $V^-(r)$ are both measurable.
\end{lemma}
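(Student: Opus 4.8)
The plan is to recognize each of $V^+(r)$ and $V^-(r)$ as the projection onto the first coordinate of an \emph{open} subset of the product space $V \times V$, and then to invoke the elementary fact that coordinate projections are open maps.

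First I would encode the defining strict inequalities as continuous functions of the pair $(v,v')$. Set
\[
    g_1(v,v') \coloneqq [f(x(v),v) - c_2 p(v)] - [f(x(v'),v) - c_2 p(v')],
\]
\[
    g_2(v,v') \coloneqq [u(x(v'),v) - c_1 p(v')] - [u(x(v),v) - c_1 p(v)] - r\, g_1(v,v').
\]
Each term is a finite algebraic combination of maps such as $(v,v') \mapsto f(x(v'),v)$, $(v,v') \mapsto u(x(v),v)$ and $(v,v') \mapsto p(v')$; since $x,p,u,f$ are continuous by Assumption \ref{continuity} and continuity is preserved under composition, addition and scalar multiplication, both $g_1$ and $g_2$ are continuous on the compact product $V \times V$. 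A direct check against the definitions shows that the two conditions cutting out $V^+(r)$ are exactly $g_1(v,v') > 0$ and $g_2(v,v') > 0$, while for $V^-(r)$ the first condition becomes $g_1(v,v') < 0$ and the second remains $g_2(v,v') > 0$.

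Next I would observe that membership is existential in $v'$, so with $\pi_1 : V \times V \to V$ the projection onto the first factor,
\[
    V^+(r) = \pi_1\big(\{g_1 > 0\} \cap \{g_2 > 0\}\big), \qquad V^-(r) = \pi_1\big(\{g_1 < 0\} \cap \{g_2 > 0\}\big).
\]
In each case the set inside $\pi_1$ is an intersection of preimages of open rays under the continuous maps $g_1, g_2$, hence open in $V \times V$. Since a coordinate projection from a product space is an open map — for any $v$ in the image, a witnessing $v'$ yields a basic open box around $(v,v')$ inside the set whose projection is an open neighborhood of $v$ — the images $V^+(r)$ and $V^-(r)$ are open in $V$, and therefore Borel and $\rho$-measurable.

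The only point requiring care is the existential quantifier. In general the projection of a Borel set need only be analytic rather than Borel, so one cannot conclude measurability from measurability of the projected set alone. The observation that removes this difficulty is that the set being projected here is \emph{open}, and the projection of an open set is always open; this is what makes the conclusion immediate rather than delicate.
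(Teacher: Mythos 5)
Your proof is correct and takes essentially the same route as the paper's: both arguments show that $V^+(r)$ and $V^-(r)$ are open subsets of $V$, using continuity of $u, f, x, p$ to turn the strict defining inequalities into open conditions that survive the existential quantifier over the witness $v'$. The paper carries out the witness-preserving $\epsilon$-ball argument by hand, which is exactly the content of your appeal to coordinate projections being open maps, so the two proofs differ only in packaging.
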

\begin{proof}
    For any $r \ge 0$, we prove that $V^+(r)$ is open subset of $V$, and similarly $V^-(r)$ is open subset of $V$, therefore both sets are measurable. 

    By continuity of $x, p, f, u$, for any $v, v'$ s.t. $f(x(v'), v) - c_2 \cdot p(v') < f(x(v), v) - c_2 \cdot p(v)$ and $[u(x(v'), v) - c_1 \cdot p(v')] - [u(x(v), v) - c_1 \cdot p(v)] > -r \cdot [f(x(v'), v) - c_2 \cdot p(v')] + r \cdot [f(x(v), v) - c_2 \cdot p(v)]$, there exists some $\epsilon > 0$ so that for any $v'' \in V \cap B(v, \epsilon)$ where $B(v, \epsilon)$ is the open ball centered at $v$ with radius $\epsilon$, we have $f(x(v'), v'') - c_2 \cdot p(v') < f(x(v''), v'') - c_2 \cdot p(v'')$ and $[u(x(v'), v'') - c_1 \cdot p(v')] - [u(x(v''), v'') - c_1 \cdot p(v'')] > -r \cdot [f(x(v'), v'') - c_2 \cdot p(v')] + r \cdot [f(x(v''), v'') - c_2 \cdot p(v'')]$. This gives $v'' \in V^+(r)$, therefore $V^+(r)$ is open subset of $V$.
\end{proof}
And the following lemma allows us to construct some measurable deterministic deviation:
\begin{lemma}\label{exist-integral}
    Under Assumption \ref{continuity}, for any $E \subseteq V$ with $\rho(E)> 0$, for any $n > 0$, any $(\alpha_1, \alpha_2, \ldots,\alpha_n) \in \mathbb{R}^n$ and $(\beta_1, \beta_2, \ldots, \beta_n) \in \mathbb{R}^n$, suppose for every $v \in E$, there exists $v' \in V$ s.t. for any $1 \le i \le n$, we have that 
    \begin{align*}
        \gamma_i(v, v') \coloneqq & \alpha_i \cdot [u(x(v'), v) - c_1 \cdot p(v')] \\
        - & \alpha_i \cdot [u(x(v), v) - c_1 \cdot p(v)] \\
        + & \beta_i \cdot [f(x(v'), v) - c_2 \cdot p(v')] \\
        - & \beta_i \cdot [f(x(v), v) - c_2 \cdot p(v)] 
    \end{align*}
    is positive, then there exists function $h: E \rightarrow V$ s.t. for any $1 \le i \le n$ and function $g_i(v) \coloneqq \gamma_i(v, h(v))$, the integration $\int_{E} g_i d\rho$ exists and is positive.
\end{lemma}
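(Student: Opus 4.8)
The plan is to reduce the statement to a measurable-selection problem, after which the integral positivity is essentially automatic. First I would isolate the true goal: to produce a single measurable function $h: E \to V$ such that $\gamma_i(v, h(v)) > 0$ for every $v \in E$ and all $1 \le i \le n$ simultaneously. Once such an $h$ is available, the conclusion follows quickly. Under Assumption \ref{continuity} the functions $u, f, x, p$ are continuous on compact domains and therefore bounded, so each $\gamma_i(\cdot, h(\cdot))$ is bounded; if $h$ is measurable then each $g_i = \gamma_i(\cdot, h(\cdot))$ is a bounded measurable function, hence $\int_E g_i \, d\rho$ exists and is finite, and since $g_i > 0$ everywhere on $E$ with $\rho(E) > 0$, the integral is strictly positive (for large $m$ the set $\{g_i > 1/m\}$ has positive measure, which bounds the integral below by $\tfrac{1}{m}\rho(\{g_i > 1/m\}) > 0$).

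So the real work is the construction of $h$. For each fixed $w \in V$ I would define $E_w := \{v \in E : \gamma_i(v, w) > 0 \text{ for all } i\}$. Because each $\gamma_i(\cdot, w)$ is continuous in its first argument, $E_w$ is relatively open in $E$, hence measurable, and the hypothesis says precisely that the family $\{E_w\}_{w \in V}$ covers $E$. The next step is to replace this possibly uncountable cover by a countable one: since $V \subseteq \mathbb{R}^d$ is separable, fix a countable dense set $\{w_k\}_{k \ge 1} \subseteq V$. For any $v \in E$ with witness $w$, continuity of $w' \mapsto \gamma_i(v, w')$ at $w$ (for each of the finitely many $i$) gives a neighborhood of $w$ on which all $\gamma_i(v, \cdot)$ remain positive, and density places some $w_k$ inside that neighborhood, so $v \in E_{w_k}$. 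Hence $E = \bigcup_k E_{w_k}$.

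Finally I would disjointify and select by first index. Setting $F_1 = E_{w_1}$ and $F_k = E_{w_k} \setminus \bigcup_{j<k} E_{w_j}$, the sets $F_k$ are disjoint and measurable with union $E$; define $h(v) = w_k$ for $v \in F_k$. This $h$ is a countably-valued measurable function, and for $v \in F_k \subseteq E_{w_k}$ we have $\gamma_i(v, h(v)) = \gamma_i(v, w_k) > 0$ for every $i$. Measurability of each $g_i$ follows because its restriction to $F_k$ coincides with the continuous map $v \mapsto \gamma_i(v, w_k)$, so $g_i$ is a countable gluing of continuous pieces over disjoint measurable sets; I then conclude exactly as in the first paragraph.

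I expect the only genuinely delicate point to be ensuring that one selection $h$ serves all $n$ coefficient pairs at once. This is handled for free: the witness in the hypothesis is required to make every $\gamma_i$ positive simultaneously, so the joint condition is already built into each $E_w$, and neither the separability reduction nor the disjointification ever decouples the indices. I would also note that continuity lets me bypass the general Kuratowski--Ryll-Nardzewski selection theorem entirely, since the covering-and-disjointifying argument above is elementary.
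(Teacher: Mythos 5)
Your proof is correct, and it reaches the conclusion by a genuinely different (and cleaner) route than the paper's own sketch. Both arguments build $h$ as a piecewise-constant selection from a cover of $E$ by sets on which a single fixed witness works, but the covering machinery differs. The paper first carves out a subset $E_\epsilon \subseteq E$ of positive measure on which witnesses achieve a \emph{uniform} gap $\gamma_i \ge \epsilon$, argues that $E_\epsilon$ is closed in the compact set $V$, extracts a \emph{finite} subcover of $E_\epsilon$ from the relatively open sets on which a fixed witness stays profitable, and defines $h$ with finite range there, setting $h(v)=v$ (hence $g_i = 0$) on $E \setminus E_\epsilon$; positivity of the integral then comes only from $E_\epsilon$. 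You instead exploit separability of $V \subseteq \mathbb{R}^d$: continuity of $\gamma_i(v,\cdot)$ in the witness lets a countable dense family $\{w_k\}$ of witnesses suffice, the sets $E_{w_k}$ form a countable, relatively open (hence measurable) cover of \emph{all} of $E$, and disjointification yields a countably-valued measurable $h$ with $g_i > 0$ everywhere on $E$. What your route buys: it entirely skips the uniform-gap step, which is the most delicate part of the paper's sketch (one must justify both that some $E_\epsilon$ has positive measure --- itself a countable-union argument with its own measurability concerns --- and that it is closed), and it delivers the stronger intermediate conclusion that the deviation is strictly profitable at every point of $E$. What the paper's route buys: a finite-range $h$, at the price of the compactness bookkeeping. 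Two small points you should make explicit: measurability of $E_{w} = E \cap \bigcap_{i} \{v : \gamma_i(v,w) > 0\}$ uses that $E$ itself is measurable, which is implicit in the hypothesis $\rho(E) > 0$ (the paper makes the same implicit assumption); and your final positivity argument should note that $E = \bigcup_m \{v \in E : g_i(v) > 1/m\}$ is an increasing union, so continuity from below of $\rho$ supplies the $m$ with $\rho(\{g_i > 1/m\}) > 0$.
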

\begin{proof}[Sketched proof]
    $\rho(E) > 0$ gives that, we can find some $\epsilon>0$ and some $E_\epsilon$ with $\rho(\epsilon) > 0$, so that for any $v \in E$, there exists $v' \in V$ s.t. any $1 \le i \le n$, we have $\gamma_i(v, v') \ge \epsilon$. By compactness of $V$, we can obtain that $V \setminus E_\epsilon$ is open in $V$, therefore $E_\epsilon$ is closed in $V$.
    
    For any $v \in E_\epsilon$ and any $1 \le i \le n$, since $x, p, u, f$ are all continuous, for every $v \in E_\epsilon$ we can find such an open set $E_v \subseteq V$ that contains $v$: there exists some $v'$ s.t. for every $1 \le i \le n$ and every $v'' \in E_v$, we have $\gamma_i(v'', v') >0$. The set
    \[
        \{E_v | v \in E_\epsilon\}
    \]
    is an open cover of $E_\epsilon$. Since $V$ is compact and $E_\epsilon$ is closed in $V$, we can find some finite subcover of $E_\epsilon$. By assigning each $v \in E_\epsilon$ to some element of the finite subcover, we can define function $h: E_\epsilon \rightarrow V$ whose range is finite. For $v \in E \setminus E_\epsilon$ we can simply make $h(v) = v$. This makes the integration of $g_i$ exist for every $1 \le i \le n$. 
\end{proof}

Given the above two lemmas, we are able to prove the following lemma, which is essential for our characterizations of incentive compatibility. In fact, the lemma itself gives if and only if condition of incentive compatibility.
\begin{lemma}\label{ic-iff-new}
    Under Assumption \ref{continuity}, the interim rule pair $(x,p)$ is incentive compatible, if and only if $\mathbb{E}_{v} [f(x(v), v) - c_2 \cdot p(v)] \ge C$, and at least one of the following two conditions holds:
    \begin{enumerate}
        \item The ex-ante constraint is binding when reporting truthfully, and for almost every $v \in V$, for every $v' \in V$, $f(x(v), v) - c_2 \cdot p(v) \ge f(x(v'), v) - c_2 \cdot p(v')$, and if $f(x(v), v) - c_2 \cdot p(v) = f(x(v'), v) - c_2 \cdot p(v')$, we have $u(x(v), v) - c_1 \cdot p(v) \ge u(x(v'), v) - c_1 \cdot p(v')$.
        \item There exists $r \ge 0$ s.t. for almost every $v \in V$, for every $v' \in V$, 
        {\small
        \begin{equation} \label{new-sup-g}
        \begin{aligned}
            & [u(x(v'), v) - c_1 \cdot p(v')] - [u(x(v), v) - c_1 \cdot p(v)] \\
            \le & -r \cdot [f(x(v'), v) - c_2 \cdot p(v')] \\
            & + r \cdot [f(x(v), v) - c_2 \cdot p(v)],
        \end{aligned}
        \end{equation}
        }%
        and if $r>0$, the ex-ante constraint is binding when reporting truthfully.
    \end{enumerate}
\end{lemma}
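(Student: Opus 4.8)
The plan is to read the two conditions as a Lagrangian/KKT description of the constrained optimization ``maximize ex-ante utility subject to the ex-ante constraint,'' with $r$ playing the role of the multiplier and Condition 1 being the degenerate ``$r=\infty$'' regime in which the player first maximizes the constraint contribution and breaks ties by utility. Throughout I abbreviate the ex-ante utility contribution of report $v'$ at type $v$ by $\Phi(v',v) \coloneqq u(x(v'),v) - c_1 p(v')$ and the constraint contribution by $\Psi(v',v) \coloneqq f(x(v'),v) - c_2 p(v')$, so that inequality (\ref{new-sup-g}) reads $\Phi(v',v) - \Phi(v,v) \le r[\Psi(v,v) - \Psi(v',v)]$, i.e.\ truth-telling maximizes the pointwise Lagrangian $\Phi + r\Psi$.

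Sufficiency (the conditions imply incentive compatibility) I expect to be the routine direction: given any feasible strategy $s$, integrate the pointwise inequalities against $s$. Under Condition 2, integrating (\ref{new-sup-g}) yields $U(x,p,s) - \mathbb{E}[\Phi(v,v)] \le r(\mathbb{E}[\Psi(v,v)] - \mathbb{E}_s[\Psi])$; if $r=0$ the right-hand side is $0$, and if $r>0$ then binding gives $\mathbb{E}[\Psi(v,v)] = C \le \mathbb{E}_s[\Psi]$, so the right-hand side is $\le 0$. Either way truth-telling is at least as good, which is exactly the definition of incentive compatibility. Condition 1 is the limiting version: truth maximizes $\Psi$ pointwise, so $\mathbb{E}_s[\Psi] \le C$; feasibility then forces $\mathbb{E}_s[\Psi] = C$ and hence $\Psi(v',v) = \Psi(v,v)$ holds $s$-almost surely, whereupon the tie-breaking clause gives $\Phi(v,v) \ge \Phi(v',v)$ and utility again does not improve.

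Necessity is the crux. First split on whether the constraint binds. If it is slack, I would show Condition 2 holds with $r=0$: were there a positive-measure set of types admitting a utility-improving report, Lemma \ref{exist-integral} (with $\alpha_1=1,\beta_1=0$) produces a measurable finite-range deviation $h$ with positive $\Phi$-gain integral, and mixing it with truth-telling at a small weight keeps $\mathbb{E}[\Psi]\ge C$ by slack while strictly raising utility, contradicting incentive compatibility. In the binding case I would produce the multiplier through the families $V^+(r),V^-(r)$, measurable by Lemma \ref{measurable}. The key step is that incentive compatibility forbids $\rho(V^+(r))>0$ and $\rho(V^-(r))>0$ simultaneously: a $V^+(r)$-report strictly raises the Lagrangian while lowering $\Psi$, a $V^-(r)$-report raises it while raising $\Psi$, and Lemma \ref{exist-integral} (applied with integrands $\Phi+r\Psi$ and $\pm\Psi$, all positive by the definitions of $V^\pm$) produces measurable deviations on each set; mixing them with weights tuned so that the net change in $\mathbb{E}[\Psi]$ vanishes gives a feasible strategy whose Lagrangian, hence utility, strictly increases. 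Because membership in $V^+(r)$ (resp.\ $V^-(r)$) is governed by a strict upper (resp.\ lower) threshold on $r$, the maps $r\mapsto\rho(V^+(r))$ and $r\mapsto\rho(V^-(r))$ are monotone and respectively right- and left-continuous. Setting $r^* \coloneqq \inf\{r\ge 0:\rho(V^+(r))=0\}$, right-continuity gives $\rho(V^+(r^*))=0$, while $\rho(V^-(r))=0$ for all $r<r^*$ plus left-continuity gives $\rho(V^-(r^*))=0$ (the degenerate endpoint $r^*=0$ being handled directly, since a $\Psi$-raising, $\Phi$-raising deviation is immediately feasible and profitable); both families being null is precisely (\ref{new-sup-g}) a.e., i.e.\ Condition 2. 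If no finite $r^*$ exists, then $\rho(V^-(r))=0$ for every $r$, and letting $r\to\infty$ forces truth-telling to maximize $\Psi$ pointwise a.e., which is Condition 1.

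The hardest part, I expect, is the measurable combination of a constraint-lowering and a constraint-raising deviation into one feasible, strictly improving strategy: controlling the signs of the relevant integrals through Lemma \ref{exist-integral}, and arranging disjoint positive-measure supports (or mixing at atoms of $\rho$) when $V^+(r)$ and $V^-(r)$ overlap. The accompanying nuisance is the boundary behavior not captured by $V^\pm$ — deviations with $\Psi(v',v)=\Psi(v,v)$ but strictly larger $\Phi$, together with the tie-breaking clauses of both conditions — which I would dispatch by a direct incentive-compatibility-violation argument (such constraint-neutral utility gains are immediately profitable) combined with the one-sided continuity already established for $V^\pm$.
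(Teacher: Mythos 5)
Your proposal is correct and follows essentially the same route as the paper's proof: sufficiency by integrating the pointwise inequality (\ref{new-sup-g}) against the strategy, and necessity via the sets $V^+(r),V^-(r)$ (Lemma \ref{measurable}), measurable deviations from Lemma \ref{exist-integral}, a two-deviation mixture whose weights null out the net change in the constraint, and a monotone-threshold argument in $r$, with Condition 1 arising as the $r\to\infty$ limit. Two points of comparison: you explicitly treat the slack-constraint case (which the paper's necessity argument omits) and you state the one-sided continuity of $r\mapsto\rho(V^{\pm}(r))$ correctly (the paper's own justification that $\rho^+(r_0)=0$ uses unions from below with reversed inclusions, though its conclusion is right), while your dispatch of the exact-tie case ($\Psi$ unchanged, $\Phi$ strictly larger) as ``immediately profitable'' inherits the same looseness as the paper's Case~1, since Lemma \ref{exist-integral} requires strict pointwise inequalities and so cannot by itself produce a measurable, constraint-preserving tie deviation.
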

\begin{proof}
    \textbf{Sufficiency}: Suppose the first condition holds. A strategy $s$ obtains strictly higher utility than reporting truthfully only if there exists some $E \subseteq V$ with $\rho(E)>0$ s.t. for every $v \in E$, it reports some $v' \ne v$ with positive probability, and $\mathbb{E}_{v} [u(x(v), v) - c_1 \cdot p(v)] < \mathbb{E}_{v} [u(x(v'), v) - c_1 \cdot p(v')]$. However, for almost every $v \in E$ and any such $v'$ we have $\mathbb{E}_{v} [f(x(v), v) - c_2 \cdot p(v)] < \mathbb{E}_{v} [f(x(v'), v) - c_2 \cdot p(v')]$, which means such a strategy $s$ violates the ex-ante constraint. Therefore, no strategy $s$ gets higher utility than truthfully reporting without violating the constraint.

    Suppose the second condition holds. If $r=0$, reporting truthfully maximizes ex-post utility $[u(x(\cdot), v) - c_1 \cdot p(\cdot)]$ for any $v \in V$. Therefore, every $s$ does not obtain higher expected utility than truthfully reporting. 

    If $r > 0$, suppose there exists strategy $s$ s.t. reporting according to $s$ does not violate the ex-ante constraint, but gives strictly higher utility than reporting truthfully. This means, for any $r \ge 0$, we have
    \begin{align*}
        & \mathbb{E}_{v, v' \sim s(v)}[u(x(v'), v) - c_1 \cdot p(v')] \\
        >& \mathbb{E}_{v}[u(x(v), v) - c_1 \cdot p(v)], \\
        & r \cdot \mathbb{E}_{v, v' \sim s(v)}[f(x(v'), v) - c_2 \cdot p(v')] \\
        \le & r \cdot \mathbb{E}_{v}[f(x(v), v) - c_2 \cdot p(v)]
    \end{align*}
    By subtraction between the equations and rearrangement of terms, we have:
    \begin{align*}
        & \mathbb{E}_{v, v' \sim s(v)} [u(x(v'), v) - c_1 \cdot p(v')] \\
        & - \mathbb{E}_{v}[u(x(v), v) - c_1 \cdot p(v)] \\
        > & - r \cdot \mathbb{E}_{v, v' \sim s(v)}[f(x(v'), v) - c_2 \cdot p(v')] \\
        & + r \cdot \mathbb{E}_{v}[f(x(v), v) - c_2 \cdot p(v)]
    \end{align*}
    However, we have equation (\ref{new-sup-g}) for almost every $v \in V$ and every $v'$, which leads to contradiction. We conclude that such $s$ does not exist, which means $(x, p)$ is incentive compatible.

    \textbf{Necessity}: Suppose $(x, p)$ is incentive compatible, but both conditions in Lemma \ref{ic-iff-new} are violated. 

    If the ex-ante constraint is binding, violation of the first condition can be further divided into 2 cases:

    Case 1: There exists some $E \subseteq V$ with $\rho(E) > 0$ s.t. for every $v \in E$, there exists some $v' \in V$ s.t. $f(x(v), v) - c_2 \cdot p(v) = f(x(v'), v) - c_2 \cdot p(v')$, and $u(x(v), v) - c_1 \cdot p(v) < u(x(v'), v) - c_1 \cdot p(v')$. Then, by Lemma \ref{exist-integral} there exists some strategy which misreports on $E$ and reports truthfully on $V \setminus E$, which gives higher utility than reporting truthfully without violating the ex-ante constraint, and this contradicts with incentive compatibility.

    Case 2: There exists some $E \subseteq V$ with $\rho(E) > 0$ s.t. for every $v \in E$, there exists some $v' \in V$ with $f(x(v), v) - c_2 \cdot p(v) < f(x(v'), v) - c_2 \cdot p(v')$. Note that for almost every $v \in E$ and the corresponding $v'$, we have $u(x(v), v) - c_1 \cdot p(v) \ge u(x(v'), v) - c_1 \cdot p(v')$, otherwise there exists some strategy which misreports on $F$, so that the utility is higher than truthfully reporting, and the ex-ante constraint remains satisfied. 

    Since equation (\ref{new-sup-g}) does not hold for almost every $v \in V$ when $r=0$, there exists some $F \subseteq V$ with $\rho(F)>0$ s.t. for every $v \in F$, there exists some $v' \in V$ with $u(x(v), v) - c_1 \cdot p(v) < u(x(v'), v) - c_1 \cdot p(v')$. Similarly, for almost every $v \in F$ and the corresponding $v'$, we have $f(x(v), v) - c_2 \cdot p(v) > f(x(v'), v) - c_2 \cdot p(v')$.

    For any $r \ge 0$, define $\rho^+(r)$ and $\rho^-(r)$ by 
    \begin{align*}
        & \rho^+(r) \coloneqq \rho(V^+(r)),\\
        & \rho^-(r) \coloneqq \rho(V^-(r)).
    \end{align*}

    Note that $\rho^-$ is non-decreasing in $r$, while $\rho^+$ is non-increasing in $r$. Moreover, the existence of $F$ gives $\rho^+(0)>0$, and the existence of $E$ gives $\rho^-(r)>0$ for some sufficiently large $r$. If for any $r \ge 0$ we have $\rho^+ >0$, then there exists $r^*$ s.t. $\rho^+(r^*)>0, \rho^-(r^*)>0$. 

    If there exists $r$ s.t. $\rho^+(r) = 0$, define 
    \[
        r_0 \coloneqq \inf\{r | \rho^+(r) =0\}.
    \]
    Note that we have $\rho^+(r_0) =0$. Suppose we have $\rho^+(r_0)>0$, then for sequence $(r_i)_{i \ge 1}$ given by $r_i = r_0 - 2^{-i}$, we have $V^+(r_1) \subseteq V^+(r_2) \subseteq \cdots $ and $\cup_{i \ge 1}V^+(r_i) = V^+(r_0)$, therefore $\lim_{i \to \infty}\rho^+(r_i) = \rho^+(r_0)>0$, and there exists $i>0$ s.t. $\rho^+(r_0-2^{-i})>0$, which contradicts with the definition of $r_0$. Suppose $\rho^-(r_0) =0$, then equation (\ref{new-sup-g}) holds for almost every $v \in V$ given $r = r_0$, thus the second condition is satisfied. Otherwise, $\rho^-(r_0) >0$ implies the existence of some $r^*<r_0$ s.t. $\rho^+(r^*)>0, \rho^-(r^*) >0$. 

    Therefore, if both conditions are not satisfied, we can get $r^*$ with $\rho^+(r^*)>0, \rho^-(r^*)>0$. Define $F_1 \coloneqq V^+(r^*)$ and $F_2 \coloneqq V^-(r^*)$, both with positive measure, and there exists functions $h_1: F_1 \rightarrow V$, $h_2: F_2 \rightarrow V$ s.t. by Lemma \ref{exist-integral}, for changes in utilities and LHS of ex-ante constraint defined as
    \begin{align*}
        \Delta U_1 = & \mathbb{E}_{v \in F_1} [u(x(h_1(v)),v) - c_1 \cdot p(h_1(v))] \\
        & - \mathbb{E}_{v \in F_1} [u(x(v),v) - c_1 \cdot p(v)], \\
        \Delta U_2 = & \mathbb{E}_{v \in F_2} [u(x(h_2(v)),v) - c_1 \cdot p(h_2(v))] \\
        & - \mathbb{E}_{v \in F_2} [u(x(v),v) - c_1 \cdot p(v)], \\
        \Delta C_1 = & \mathbb{E}_{v \in F_1} [f(x(h_1(v)),v) - c_2 \cdot p(h_1(v))] \\
        & - \mathbb{E}_{v \in F_1} [f(x(v),v) - c_2 \cdot p(v)], \\
        \Delta C_2 = & \mathbb{E}_{v \in F_2} [f(x(h_2(v)),v) - c_2 \cdot p(h_2(v))] \\
        & - \mathbb{E}_{v \in F_2} [f(x(v),v) - c_2 \cdot p(v)],
    \end{align*}
    we have $\Delta U_1 > 0, \Delta C_2 > 0$, and $\Delta U_1 + r^* \cdot \Delta C_1 > 0, \Delta U_2 + r^* \cdot \Delta C_2 > 0$. If $\Delta C_1 \ge 0$, then directly applying $h_1$ on $F_1$ contradicts with incentive compatibility. Otherwise, for $0 < \mu_1 < \mu_2 \ll 1$ s.t. $\mu_1 \Delta C_1 + \mu_2 \Delta C_2 = 0$, the change in expected utility by reporting according to $h_1$ w.p. $\mu_1$ on $F_1$, according to $h_2$ w.p. $\mu_2$ on $F_2$, and otherwise truthfully is given by
    \begin{align*}
        & \mu_1 \Delta U_1 + \mu_2 \Delta U_2 \\
        > & - \mu_1 r_0 \cdot \Delta C_1 - \mu_2 r_0 \cdot \Delta C_2 \\
        = & 0.
    \end{align*}
    This contradicts with incentive compatibility.
\end{proof}

Given Lemma \ref{measurable} - \ref{ic-iff-new}, the proof of Theorem \ref{taxation} is quite straightforward:
\begin{proof}[Proof of Theorem \ref{taxation}]
    In fact, each of the two conditions in the theorem is equivalent to a corresponding condition in Lemma \ref{ic-iff-new}. We only give proof for equivalence of the second condition, because equivalence of the first condition is almost identical in shape and idea with the standard taxation principle in quasi-linear player case. 

    Suppose the second condition of the theorem holds. Then, for almost every $v \in V$, equation (\ref{new-sup-g}) is naturally satisfied for all $v' \in V$, and we have $r>0$ only if the ex-ante constraint is binding. By Lemma \ref{ic-iff-new}, $(x,p)$ is incentive compatible. This proves sufficiency of the condition.

    Suppose $(x,p)$ is incentive compatible by satisfying the second condition of Lemma \ref{ic-iff-new}. Then, for almost every $v$, for every $v'$ s.t. $x(v') = x(v)$, (\ref{new-sup-g}) gives 
    \[
        (c_1+rc_2) \cdot p(v') \ge (c_1+rc_2) \cdot p(v).
    \]
    Consider $p_x$ defined by:
    \[
        p_x(q) \coloneqq \inf_{v: x(v) = q} (c_1+rc_2)p(v)
    \]
    Then, for almost every $v \in V$, we have $(c_1+rc_2)p(v) \le p_x(x(v))$. By definition of infimum, for almost every $v \in V$, we have $p_x(x(v)) = (c_1+rc_2)p(v)$. With such a $p_x$, inequality (\ref{new-sup-g}) implies equation (\ref{taxed-maximization}) for almost every $v \in V$, which finishes our proof.
\end{proof}

\section{Differential properties of incentive compatible mechanisms}\label{diff-prop}
For the rest of this paper, we make the following assumption on $\rho$:
\begin{assume}\label{zero_boundar}
    The boundary of $V$ has zero probability, that is, $\rho(\partial V) = 0$.
\end{assume}
Therefore, when taking any expectation over $V$ with respect to $\rho$, it suffices to consider only $\mathring V$, the interior of $V$. Note that the assumption implies that $\mathring V \ne \emptyset$.

And we make the following assumptions, that the terms in the player's utility and constraint are all differentiable:
\begin{assume} \label{differentiable}
    For every $v \in \mathring V$, the single variable functions $u(x(\cdot), v), f(x(\cdot), v)$ and $p(\cdot)$ are differentiable at $v$.
\end{assume}
Therefore, we can define the vector-valued functions $\hat u$ and $\hat f$ by
\[
    \hat u(v) \coloneqq \nabla_{v'} [u(x(v'), v) - c_1 \cdot p(v')] |_{v'=v},\ \  \forall v \in \mathring V
\]
and 
\[
    \hat f(v) \coloneqq \nabla_{v'} [f(x(v'), v) - c_2 \cdot p(v')] |_{v'=v}, \ \ \forall v \in \mathring V.
\]

The following theorem gives necessary condition of incentive compatibility in terms of the mechanism's differential properties:
\begin{theorem} \label{ic-necessary}
    Under Assumption \ref{continuity}, \ref{zero_boundar} and \ref{differentiable}, the interim rule pair $(x,p)$ is incentive compatible only if at least one of the following two conditions hold:
    \begin{enumerate}
        \item The ex-ante constraint is binding when reporting truthfully, and for almost every $v \in \mathring V$, $\hat f(v) = \mathbf{0}$.
        \item There exists constant $r \in \mathbb{R}_{\ge 0}$ s.t. for almost every $v \in \mathring V$, $\hat u(v) + r \cdot \hat f(v) = 0$, and if $r>0$, the ex-ante constraint is binding when reporting truthfully.
    \end{enumerate}
\end{theorem}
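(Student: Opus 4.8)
The plan is to derive Theorem \ref{ic-necessary} directly from the exact characterization in Lemma \ref{ic-iff-new}, converting each of its two ``global optimality over $v'$'' conditions into a first-order stationarity condition on the gradient. The bridge between the two is the elementary fact that a differentiable function attaining its maximum at an interior point has vanishing gradient there: Assumption \ref{zero_boundar} guarantees the relevant points are interior, while Assumption \ref{differentiable} supplies differentiability of the objective in $v'$ at $v'=v$. First I would invoke Lemma \ref{ic-iff-new}, so that incentive compatibility forces at least one of its two conditions to hold for almost every $v$. Since $\rho(\partial V)=0$ by Assumption \ref{zero_boundar}, I may discard the null set $\partial V$ and assume all almost-everywhere statements hold for almost every $v \in \mathring V$, where Assumption \ref{differentiable} also applies.

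For the first condition of Lemma \ref{ic-iff-new}, the constraint is binding and, for almost every $v$ and every $v' \in V$, $f(x(v),v)-c_2 p(v) \ge f(x(v'),v)-c_2 p(v')$. Fixing such a $v \in \mathring V$ and setting $\phi(v') \coloneqq f(x(v'),v)-c_2 p(v')$, the point $v'=v$ maximizes $\phi$ over $V$; since $v$ is interior and $\phi$ is differentiable at $v'=v$, the interior first-order argument gives $\nabla_{v'}\phi(v')|_{v'=v} = \hat f(v) = \mathbf{0}$. The binding clause carries over verbatim, yielding condition 1 of the theorem. For the second condition of Lemma \ref{ic-iff-new}, with some $r \ge 0$ inequality (\ref{new-sup-g}) holds for almost every $v$ and every $v'$; rearranged, this says $v'=v$ maximizes $\psi(v') \coloneqq [u(x(v'),v)-c_1 p(v')] + r\,[f(x(v'),v)-c_2 p(v')]$ over $V$. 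The same interior stationarity argument applied to $\psi$ at the interior point $v$ gives $\hat u(v) + r\,\hat f(v) = \mathbf{0}$, and the clause ``if $r>0$ the constraint is binding'' is inherited directly from the lemma, establishing condition 2. Since every incentive compatible pair satisfies one of the two lemma conditions, the disjunction in the theorem follows.

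The only nontrivial step, and the one I expect to be the main obstacle, is justifying the passage from global optimality of truth-telling to the vanishing gradient. The delicate point is to confirm that differentiability of the objective in $v'$ at $v'=v$, combined with interiority of $v$, is genuinely sufficient: differentiability lets me express the directional derivative along any direction $w$ as the inner product of $\nabla_{v'}\phi(v')|_{v'=v}$ with $w$, and because $v \in \mathring V$ the inequality $\phi(v+tw) \le \phi(v)$ is available for small $t$ of both signs, forcing this inner product to vanish for every $w$, hence the gradient to be $\mathbf{0}$. No regularity of $x$ or $p$ beyond what Assumptions \ref{continuity} and \ref{differentiable} already provide is required.
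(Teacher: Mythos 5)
Your proof is correct and takes essentially the same approach as the paper: both reduce to Lemma \ref{ic-iff-new} and then convert its global-optimality-at-$v'=v$ conditions into vanishing-gradient conditions at interior points of $V$. The only cosmetic difference is direction: the paper argues contrapositively, packaging the first-order step as Lemma \ref{diff-exists} (nonzero gradient at an interior point implies a strictly better report exists), whereas you argue directly that an interior maximum of the differentiable objective forces the gradient to vanish.
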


To prove Theorem \ref{ic-necessary}, we first state the following useful lemma. 
\begin{lemma}\label{diff-exists}
    Given bivariate function $g: V \times V \rightarrow \mathbb{R}$ s.t. for any $v \in \mathring V$, the single variable function $g(\cdot, v)$ is differentiable at $v$, suppose there exists some $E \subseteq \mathring V$ with $\rho(E)>0$ s.t. for any $v\in E$, we have $\nabla_{v'} g(v', v) |_{v'=v} \ne \mathbf{0}$, then for any $v \in E$, there exists $v' \in V$ s.t $g(v', v) > g(v, v)$.
\end{lemma}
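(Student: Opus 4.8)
The plan is to argue pointwise via a first-order directional-derivative estimate; the positive-measure hypothesis on $E$ plays no role in the conclusion and is inherited only from the intended application in Theorem \ref{ic-necessary}. Fix an arbitrary $v \in E$. Since $E \subseteq \mathring V$, the point $v$ lies in the interior of $V$, so there exists $\epsilon > 0$ with $B(v, \epsilon) \subseteq V$, where $B(v,\epsilon)$ is the open ball centered at $v$. Write $\mathbf{w} \coloneqq \nabla_{v'} g(v', v)|_{v'=v}$, which is nonzero by hypothesis. The idea is simply that moving from $v'=v$ in the gradient direction $\mathbf{w}$ strictly increases $g(\cdot, v)$ to first order, and interiority guarantees that such a move stays inside $V$.

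First I would invoke differentiability of the single-variable map $v' \mapsto g(v', v)$ at the diagonal point $v'=v$. Restricting to the ray $t \mapsto v + t\mathbf{w}$ and substituting the increment $\mathbf{h} = t\mathbf{w}$ into the first-order expansion, I obtain
\[
    g(v + t\mathbf{w}, v) = g(v, v) + t\,\langle \mathbf{w}, \mathbf{w}\rangle + o(t\|\mathbf{w}\|) = g(v, v) + t\,\|\mathbf{w}\|^2 + o(t)
\]
as $t \to 0^+$, where I used that the directional derivative along $\mathbf{w}$ equals $\langle \nabla_{v'} g(v', v)|_{v'=v},\, \mathbf{w}\rangle = \|\mathbf{w}\|^2$ and that $\|\mathbf{w}\|$ is a positive constant, so $o(t\|\mathbf{w}\|) = o(t)$. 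Because $\mathbf{w} \ne \mathbf{0}$, the leading term $t\|\mathbf{w}\|^2$ is strictly positive for $t>0$.

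Then I would choose $t > 0$ small enough to simultaneously keep $v + t\mathbf{w}$ inside $B(v, \epsilon) \subseteq V$, which is possible since the ray emanates from the interior point $v$, and to force the remainder to satisfy $|o(t)| < t\|\mathbf{w}\|^2$, which is possible by the definition of the little-$o$ term. For any such $t$, setting $v' \coloneqq v + t\mathbf{w} \in V$ yields $g(v', v) > g(v, v)$, as claimed.

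I do not expect a genuine obstacle here, since this is a routine first-order argument; the only point demanding care is verifying that the perturbed point $v + t\mathbf{w}$ remains in $V$. This is exactly where the hypothesis $E \subseteq \mathring V$ (rather than merely $E \subseteq V$) is essential: at a boundary point the gradient could point outward, leaving no admissible improving direction, whereas interiority provides the slack needed to move in the gradient direction in every coordinate.
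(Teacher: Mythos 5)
Your proof is correct and follows essentially the same route as the paper's: both fix $v \in E$, use interiority to get a ball $B(v,\epsilon) \subseteq V$, and exploit the nonzero gradient via a first-order argument (the paper picks a unit direction with positive directional derivative, you move along the unnormalized gradient $\mathbf{w}$ with an explicit $o(t)$ estimate) to produce an improving point $v'$ inside $V$. Your observation that the positive-measure hypothesis on $E$ is immaterial to the pointwise conclusion is also consistent with how the paper's proof proceeds.
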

\begin{proof}
    For any $v \in E$, since $v \in \mathring V$, there exists $\epsilon_0 > 0$ s.t. the open ball $B(v, \epsilon_0) \subseteq V$. $\nabla_{v'} g(v', v) |_{v'=v} \ne \mathbf{0}$ implies the existence of some unit vector $y \in S^{d-1}$ s.t. $\lim_{\epsilon \downarrow 0} \frac{g(v+\epsilon y,v) - g(v,v)}{\epsilon}>0$. Therefore, we can find $\epsilon_1 < \epsilon_0$ s.t. $g(v+\epsilon_1 y,v) > g(v, v)$.
\end{proof}

Then we are ready to prove Theorem \ref{ic-necessary}. 
\begin{proof}[Proof of Theorem \ref{ic-necessary}]
    Suppose $(x,p)$ is incentive compatible, but neither of the two conditions hold. 

    If the constraint is unbinding when reporting truthfully, by Assumption \ref{zero_boundar}, the second condition is violated only if there exists some $E \subseteq \mathring V$ with $\rho(E)>0$ s.t. for every $v \in E$, $\hat u(v) \ne \mathbf{0}$. By Lemma \ref{diff-exists} for any $v \in E$ there exists $v'\in V$ s.t. $u(x(v'), v) - c_1 \cdot p(v') > u(x(v), v) - c_1 \cdot p(v)$. By Lemma \ref{ic-iff-new} such $(x,p)$ is not incentive compatible, contradiction.

    If the constraint is binding when reporting truthfully, by Assumption \ref{zero_boundar}, the first condition of Theorem \ref{ic-necessary} is violated only if there exists some $E \subseteq \mathring V$ with $\rho(E)>0$ s.t. for every $v \in E$, $\hat f(v) \ne \mathbf{0}$. By Lemma \ref{diff-exists} for any $v \in E$ there exists $v'\in V$ s.t. $f(x(v'), v) - c_2 \cdot p(v') > f(x(v), v) - c_2 \cdot p(v)$. Such $(x,p)$ violates the first condition of Lemma \ref{ic-iff-new}. The second condition of Theorem \ref{ic-necessary} is violated only if for any $r \ge 0$, there exists some $E \subseteq \mathring V$ with $\rho(E)>0$ s.t. for every $v \in E$, $\hat u(v) + r \cdot \hat f(v) \ne \mathbf{0}$. By Lemma \ref{diff-exists} for any $v \in E$ there exists $v'\in V$ s.t. 
    {\small
    \begin{equation}
        \begin{aligned}
            & [u(x(v'), v) - c_1 \cdot p(v')] - [u(x(v), v) - c_1 \cdot p(v)] \\
            > & -r \cdot [f(x(v'), v) - c_2 \cdot p(v')] + r \cdot [f(x(v), v) - c_2 \cdot p(v)],
        \end{aligned}
    \end{equation}
    }%
    which violates the second condition of Lemma \ref{ic-iff-new}. Therefore such $(x,p)$ is not incentive compatible, contradiction.

    We conclude that, to make $(x,p)$ incentive compatible, at least one of the two conditions in Theorem \ref{ic-necessary} holds.
\end{proof}
Theorem \ref{ic-necessary} gives necessary condition of incentive compatibility in terms of differential properties of the interim rules $(x, p)$ and functions $(u, f)$ given by the player model. Using Theorem \ref{ic-necessary} as a starting point, the next section gives another characterization of incentive compatibility under some additional assumptions, and the characterization is identical in shape with the payment identity in classic mechanism design \cite{10.2307/3689266,MANELLI2007153}.

\section{Quasi-linear player model}\label{char-2}
In this section, we restrict our attention to $u(\cdot, \cdot)$ and $f(\cdot, \cdot)$ which are linear in their second argument:
\begin{assume}\label{assum_linear}
    There exist functions $u^*, f^*: Q \rightarrow \mathbb{R}^d$ s.t. for any $v \in V$ and any $q \in Q$, we have
    \begin{align*}
        & u(q, v) = u^*(q) \cdot v,\\
        & f(q, v) = f^*(q) \cdot v.
    \end{align*}
    Moreover, the $u^*$ and $f^*$ are differentiable on $\mathring V$, and can be continuously extended to $V$.
\end{assume}

For technical simplicity, we also include the following assumptions. The first assumption holds for any $\rho$ which can be expressed by a positive probability density function on $V$. With these assumptions, the first condition of Theorem \ref{ic-necessary} is ruled out, and we can refer to $\hat u$ or $\hat f$ as if they are defined on $V$. 
\begin{assume}\label{dense}
    $V$ is connected. Moreover, for any $E \subseteq V$ with $\rho(E)=1$, $E$ is dense in $V$.
\end{assume}
\begin{assume}\label{nonzero-constraint-gradient}
    There exists some $E \subseteq V$ with $\rho(E)>0$ s.t. for any $v \in V$, $\hat f(v) \ne \mathbf{0}$,
\end{assume}
\begin{assume} \label{continous_diff}
    $\hat u$ and $\hat f$ are continuous on $\mathring V$, and can be continuously extended to $V$.
\end{assume}

We define the surrogate outcome function $\tilde u$ and surrogate utility function $\tilde U$, which are central in our last characterization of incentive compatibility.
\begin{define}\label{surrogate}
    The surrogate outcome $\tilde u : Q \rightarrow \mathbb{R}^d$ is defined to be:
    \[
        \tilde u(q) \coloneqq u^*(q) + r \cdot f^*(q).
    \]
    The surrogate utility $\tilde U: V \rightarrow \mathbb{R}$ is defined to be:
    \[
        \tilde U(v) \coloneqq \tilde u(x(v))\cdot v - (c_1+rc_2) \cdot p(v)
    \]
\end{define}
Note that the global constant $r$ is a parameter of surrogate outcome and surrogate utility, and such an $r$ exists if and only if $(x,p)$ is incentive compatible. When $r=0$, the surrogate utility is equal to the player's original utility. When $r>0$, the surrogate utility is a linear combination of a utility term and a constraint term, which again mirrors the results in auto-bidding literature.

With the additional assumptions, we are ready to give our second characterization of incentive compatibility.
\begin{theorem} \label{ic-iff}
    Under Assumption \ref{continuity} - \ref{continous_diff}, the interim rule pair $(x,p)$ is incentive compatible if and only only if, for surrogate outcome $\tilde u$ and surrogate utility $\tilde U$ given by Definition \ref{surrogate}, we have:
    \begin{itemize}
        \item $\tilde u$ and $\tilde U$ satisfy $\nabla \tilde U(v) = \tilde u(x(v)), \ \  \forall v \in V,$ AND
        \item $\tilde U$ is convex, AND
        \item if $r>0$, the ex-ante constraint is binding when reporting truthfully.
    \end{itemize}
\end{theorem}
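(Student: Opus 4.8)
The plan is to recognize Theorem \ref{ic-iff} as a Rochet-type convexity characterization applied to the surrogate problem, and to reduce it to the second condition of Lemma \ref{ic-iff-new}. Under Assumption \ref{assum_linear}, for a fixed $r \ge 0$ the surrogate utility satisfies $\tilde U(v) = \tilde u(x(v)) \cdot v - (c_1+rc_2)p(v) = [u(x(v),v)-c_1 p(v)] + r[f(x(v),v)-c_2 p(v)]$, while the value of misreporting $v'$ at true type $v$ is the affine-in-$v$ function $F(v',v) \coloneqq \tilde u(x(v'))\cdot v - (c_1+rc_2)p(v')$, which satisfies $F(v,v) = \tilde U(v)$. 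Rearranging inequality (\ref{new-sup-g}) and using linearity, the second condition of Lemma \ref{ic-iff-new} is exactly the surrogate incentive constraint $F(v',v) \le \tilde U(v)$ holding for almost every $v$ and every $v'$. Assumption \ref{nonzero-constraint-gradient} rules out the first condition of Theorem \ref{ic-necessary} and of Lemma \ref{ic-iff-new}, so incentive compatibility is governed by this surrogate constraint with a single global $r$, whose existence in the necessity direction is supplied by Theorem \ref{ic-necessary}; the feasibility and binding-constraint clauses transfer verbatim between the theorem and Lemma \ref{ic-iff-new}.

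Given this reduction, I would prove the two directions of the surrogate characterization. For necessity, if $F(v',v) \le \tilde U(v)$ holds then $\tilde U$ agrees almost everywhere with $G(v)\coloneqq \sup_{v'\in V}F(v',v)$, a pointwise supremum of affine functions of $v$ and hence convex: indeed $\tilde U(v) = F(v,v) \le G(v)$ always, while the surrogate constraint forces $G(v)\le \tilde U(v)$. The envelope relation then follows by differentiating, using the chain-rule identity $\nabla \tilde U(v) = \tilde u(x(v)) + \hat u(v) + r\hat f(v)$; this shows the condition $\nabla \tilde U(v) = \tilde u(x(v))$ is literally the condition $\hat u(v)+r\hat f(v)=\mathbf{0}$ delivered by Theorem \ref{ic-necessary}. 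For sufficiency, assuming $\tilde U$ convex with $\nabla \tilde U(v) = \tilde u(x(v))$, the subgradient inequality gives $\tilde U(v) \ge \tilde U(v') + \tilde u(x(v'))\cdot(v-v') = F(v',v)$, which is precisely the surrogate constraint; together with the matching binding and feasibility clauses, this is condition 2 of Lemma \ref{ic-iff-new} and yields incentive compatibility.

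The main obstacle is upgrading the almost-everywhere statements that Lemma \ref{ic-iff-new} and Theorem \ref{ic-necessary} produce to the everywhere statements demanded by Theorem \ref{ic-iff}, and in particular extracting global convexity of $\tilde U$ from an inequality valid only off a null set. This is where Assumptions \ref{dense} and \ref{continous_diff} are essential: $\tilde U$ is continuous since $x,p,u^*,f^*$ are, $G$ is convex and hence continuous on $\mathring V$, and the two agree on a full-measure and therefore (by Assumption \ref{dense}) dense subset, so continuity forces $\tilde U = G$ on all of $V$ and thus $\tilde U$ convex. Likewise $\hat u + r\hat f$ is continuous by Assumption \ref{continous_diff} and vanishes on a dense set, so it vanishes identically, giving $\nabla \tilde U(v) = \tilde u(x(v))$ for every $v \in V$; connectedness of $V$ guarantees a single $r$ works throughout. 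Finally, some care is needed to justify differentiating $\tilde U$ pointwise everywhere via Assumptions \ref{differentiable} and \ref{assum_linear}, so that the envelope identity is meaningful as an everywhere statement rather than merely almost everywhere.
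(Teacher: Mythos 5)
Your proposal is correct and follows essentially the same route as the paper: both reduce incentive compatibility to condition 2 of Lemma \ref{ic-iff-new} (which, rearranged, is exactly your surrogate constraint $F(v',v)\le\tilde U(v)$), invoke Theorem \ref{ic-necessary} together with Assumption \ref{nonzero-constraint-gradient} to obtain the global $r$ and the identity $\nabla\tilde U(v)=\tilde u(x(v))$, and use Assumptions \ref{dense}--\ref{continous_diff} to upgrade the almost-everywhere statements to everywhere via continuity and density. The only cosmetic differences are that the paper derives the gradient identity by integrating the first-order condition $\nabla[\tilde u(x(v))]\cdot v=(c_1+rc_2)\nabla p(v)$ by parts along a path rather than by your direct chain-rule computation, and it expresses convexity through the tangent-plane inequality $\tilde U(v')-\tilde U(v)-\nabla\tilde U(v')\cdot(v'-v)\le 0$ rather than your sup-of-affine-functions argument---both immaterial variations of the same Rochet-style argument.
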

\begin{proof}
    By Theorem \ref{ic-necessary} and Assumption \ref{nonzero-constraint-gradient}, $(x,p)$ is incentive compatible only if there exists constant $r \in \mathbb{R}_{\ge 0}$ s.t. for almost every $v \in \mathring V$, $\hat u(v) + r \cdot \hat f(v) = 0$, and if $r>0$, the ex-ante constraint is binding when reporting truthfully.

    By Assumption \ref{assum_linear}, $\hat u$ and $\hat f$ can be given by:
    \begin{align*}
    & \hat u(v) = \nabla[u^*(x(v))] \cdot v - c_1 \cdot \nabla p(v) \\
    & \hat f(v) = \nabla[f^*(x(v))] \cdot v - c_2 \cdot \nabla p(v) 
    \end{align*}
    For almost every $v \in \mathring V$, $\hat u(v) + r \cdot \hat f(v) = 0$. Then, equation
    {\small
    \begin{equation} \label{first-order}
        \nabla [u^*(x(v))] \cdot v - c_1 \nabla p(v) + r \cdot \nabla [f^*(x(v))] \cdot v - r c_2 \cdot \nabla p(v) = 0
    \end{equation}
    }%
    holds for almost every $v$. By Assumption \ref{zero_boundar}, \ref{dense} and \ref{continous_diff}, equation (\ref{first-order}) holds for almost every $v$ if and only if it holds for all $v \in V$. Note that by definition of surrogate outcome, (\ref{first-order}) can be further rewritten as
    \[
        \nabla [\tilde u(x(v))] \cdot v = (c_1 + rc_2) \nabla p(v)
    \]
    Consider an arbitrary point $v_0 \in V$. For any $v \in V$ and any path $L$ from $v_0$ to $v$, by integral by parts we have
    \[
        [\tilde u(x(t)) \cdot t]|_{v_0}^v - \int_L \tilde u(x(t)) \cdot \dif t = (c_1 + rc_2) \cdot p(t)|_{v_0}^{v}.
    \]
    By definition of surrogate utility, this can be rewritten as
    \[
        \tilde U(v) - \tilde U(v_0) = \int_L \tilde u(x(t)) \cdot \dif t,\ \ \forall v_0, v \in V.
    \]
    Therefore, $\tilde U$ is a potential function on $V$ whose gradient is given by $\tilde u \circ x$. In differential form, we have
    \begin{equation} \label{multidim-payment-identity}
        \nabla \tilde U(v) = \tilde u(x(v)), \ \  \forall v \in V.
    \end{equation}

    By Lemma \ref{ic-iff-new}, $(x, p)$ is incentive compatible if and only if equation (\ref{new-sup-g}) holds for almost every $v$ and every $v'$. Note that equation (\ref{new-sup-g}) can be rewritten as 
    {\small
    \[
        [\tilde u(x(v')) \cdot v - (c_1 + rc_2) p(v')]  - [\tilde u(x(v)) \cdot v - (c_1 + rc_2) p(v)] \le 0. 
    \]
    }%
    By definition of $\tilde U$ and equation (\ref{multidim-payment-identity}), we have:
    \begin{align*}
        & [\tilde u(x(v')) \cdot v - (c_1 + rc_2) p(v')] \\
        & - [\tilde u(x(v)) \cdot v - (c_1 + rc_2) p(v)] \\
        = & (\tilde u(x(v')) - \tilde u(x(v))) \cdot v - (\tilde u(x(v')) \cdot v' -\tilde U(v')) \\
        & + (\tilde u(x(v)) \cdot v -\tilde U(v)) \\
        = & \tilde U(v') - \tilde U(v) - \nabla \tilde U(v') \cdot (v'-v) 
    \end{align*}
    Therefore, $(x, p)$ is truthful if and only if for almost every $v$, for every $v'$, we have $\tilde U(v') - \tilde U(v) - \nabla \tilde U(v') \cdot (v'-v) \le 0$. By assumption of continuous $\tilde u$, $x$ and $p$, and that such $v$ is dense in $V$, we conclude that $\tilde U(v') - \tilde U(v) - \nabla \tilde U(v') \cdot (v'-v) \le 0$ holds for all $v \in V$, which is equivalent to $\tilde U$ being convex.
\end{proof}

\section*{Acknowledgments}
We sincerely acknowledge Xiao Han, who is a researcher in Theoretical Probability, for his constructive advice on the proofs.

\bibliography{autobidding}

\begin{thebibliography}{22}
\providecommand{\natexlab}[1]{#1}

\bibitem[{Agarwal et~al.(2014)Agarwal, Ghosh, Wei, and
  You}]{Agarwal2014BudgetPF}
Agarwal, D.; Ghosh, S.; Wei, K.; and You, S. 2014.
\newblock Budget pacing for targeted online advertisements at LinkedIn.
\newblock \emph{Proceedings of the 20th ACM SIGKDD international conference on
  Knowledge discovery and data mining}.

\bibitem[{Aggarwal, Badanidiyuru, and
  Mehta(2019)}]{10.1007/978-3-030-35389-6_2}
Aggarwal, G.; Badanidiyuru, A.; and Mehta, A. 2019.
\newblock Autobidding with Constraints.
\newblock In Caragiannis, I.; Mirrokni, V.; and Nikolova, E., eds., \emph{Web
  and Internet Economics}, 17--30. Cham: Springer International Publishing.
\newblock ISBN 978-3-030-35389-6.

\bibitem[{Ashlagi, Daskalakis, and Haghpanah(2016)}]{10.1145/2940716.2940775}
Ashlagi, I.; Daskalakis, C.; and Haghpanah, N. 2016.
\newblock Sequential Mechanisms with Ex-Post Participation Guarantees.
\newblock In \emph{Proceedings of the 2016 ACM Conference on Economics and
  Computation}, EC '16, 213–214. New York, NY, USA: Association for Computing
  Machinery.
\newblock ISBN 9781450339360.

\bibitem[{Ashlagi, Daskalakis, and Haghpanah(2019)}]{Ashlagi2019SequentialMW}
Ashlagi, I.; Daskalakis, C.; and Haghpanah, N. 2019.
\newblock Sequential Mechanisms With ex post Individual Rationality.

\bibitem[{Babaioff et~al.(2021)Babaioff, Cole, Hartline, Immorlica, and
  Lucier}]{Babaioff2021NonquasilinearAI}
Babaioff, M.; Cole, R.; Hartline, J.~D.; Immorlica, N.; and Lucier, B. 2021.
\newblock Non-quasi-linear Agents in Quasi-linear Mechanisms.
\newblock In \emph{ITCS}.

\bibitem[{Balseiro and Gur(2019)}]{regret2019}
Balseiro, S.; and Gur, Y. 2019.
\newblock Learning in Repeated Auctions with Budgets: Regret Minimization and
  Equilibrium.
\newblock \emph{Management Science}, 65: 3952–3968.

\bibitem[{Balseiro et~al.(2017)Balseiro, Kim, Mahdian, and
  Mirrokni}]{inproceedings}
Balseiro, S.; Kim, A.; Mahdian, M.; and Mirrokni, V. 2017.
\newblock Budget Management Strategies in Repeated Auctions.
\newblock 15--23.

\bibitem[{Balseiro et~al.(2021)Balseiro, Deng, Mao, Mirrokni, and
  Zuo}]{10.1145/3465456.3467607}
Balseiro, S.~R.; Deng, Y.; Mao, J.; Mirrokni, V.~S.; and Zuo, S. 2021.
\newblock The Landscape of Auto-Bidding Auctions: Value versus Utility
  Maximization.
\newblock In \emph{Proceedings of the 22nd ACM Conference on Economics and
  Computation}, EC '21, 132–133. New York, NY, USA: Association for Computing
  Machinery.
\newblock ISBN 9781450385541.

\bibitem[{Brainard and
  Scarf(2005)}]{https://doi.org/10.1111/j.1536-7150.2005.00349.x}
Brainard, W.~C.; and Scarf, H.~E. 2005.
\newblock How to Compute Equilibrium Prices in 1891.
\newblock \emph{American Journal of Economics and Sociology}, 64(1): 57--83.

\bibitem[{Burkett(2016)}]{principal-agent}
Burkett, J. 2016.
\newblock Optimally constraining a bidder using a simple budget.
\newblock \emph{Theoretical Economics}, 11: 133--155.

\bibitem[{Che and Gale(1998)}]{10.1111/1467-937X.00033}
Che, Y.-K.; and Gale, I. 1998.
\newblock {Standard Auctions with Financially Constrained Bidders}.
\newblock \emph{The Review of Economic Studies}, 65(1): 1--21.

\bibitem[{Chen, Kroer, and Kumar(2021)}]{10.1145/3465456.3467578}
Chen, X.; Kroer, C.; and Kumar, R. 2021.
\newblock \emph{The Complexity of Pacing for Second-Price Auctions}, 318.
\newblock New York, NY, USA: Association for Computing Machinery.
\newblock ISBN 9781450385541.

\bibitem[{Conitzer et~al.(2018)Conitzer, Kroer, Sodomka, and
  Stier-Moses}]{Conitzer2018MultiplicativePE}
Conitzer, V.; Kroer, C.; Sodomka, E.; and Stier-Moses, N. 2018.
\newblock Multiplicative Pacing Equilibria in Auction Markets.
\newblock In \emph{WINE}.

\bibitem[{Deng et~al.(2021)Deng, Mao, Mirrokni, and
  Zuo}]{10.1145/3442381.3450052}
Deng, Y.; Mao, J.; Mirrokni, V.; and Zuo, S. 2021.
\newblock \emph{Towards Efficient Auctions in an Auto-Bidding World},
  3965–3973.
\newblock New York, NY, USA: Association for Computing Machinery.
\newblock ISBN 9781450383127.

\bibitem[{Golrezaei et~al.(2021)Golrezaei, Jaillet, Liang, and
  Mirrokni}]{golrezaei2021bidding}
Golrezaei, N.; Jaillet, P.; Liang, J. C.~N.; and Mirrokni, V. 2021.
\newblock Bidding and Pricing in Budget and ROI Constrained Markets.
\newblock arXiv:2107.07725.

\bibitem[{Golrezaei, Lobel, and Paes~Leme(2021)}]{10.1145/3442381.3449841}
Golrezaei, N.; Lobel, I.; and Paes~Leme, R. 2021.
\newblock \emph{Auction Design for ROI-Constrained Buyers}, 3941–3952.
\newblock New York, NY, USA: Association for Computing Machinery.
\newblock ISBN 9781450383127.

\bibitem[{Heymann(2019)}]{10.1145/3338506.3340269}
Heymann, B. 2019.
\newblock Cost per Action Constrained Auctions.
\newblock In \emph{Proceedings of the 14th Workshop on the Economics of
  Networks, Systems and Computation}, NetEcon '19. New York, NY, USA:
  Association for Computing Machinery.
\newblock ISBN 9781450368377.

\bibitem[{Li et~al.(2020)Li, Yang, Sun, Ji, Jiang, Han, and
  Hao}]{li2020incentive}
Li, B.; Yang, X.; Sun, D.; Ji, Z.; Jiang, Z.; Han, C.; and Hao, D. 2020.
\newblock Incentive Mechanism Design for ROI-constrained Auto-bidding.
\newblock arXiv:2012.02652.

\bibitem[{Manelli and Vincent(2007)}]{MANELLI2007153}
Manelli, A.~M.; and Vincent, D.~R. 2007.
\newblock Multidimensional mechanism design: Revenue maximization and the
  multiple-good monopoly.
\newblock \emph{Journal of Economic Theory}, 137(1): 153--185.

\bibitem[{Myerson(1981)}]{10.2307/3689266}
Myerson, R.~B. 1981.
\newblock Optimal Auction Design.
\newblock \emph{Mathematics of Operations Research}, 6(1): 58--73.

\bibitem[{Wen et~al.(2021)Wen, Xu, Zhang, Zheng, Wang, Liu, Rong, Xie, Tan, Yu,
  Xu, Wu, Chen, and Zhu}]{DBLP:journals/corr/abs-2106-06224}
Wen, C.; Xu, M.; Zhang, Z.; Zheng, Z.; Wang, Y.; Liu, X.; Rong, Y.; Xie, D.;
  Tan, X.; Yu, C.; Xu, J.; Wu, F.; Chen, G.; and Zhu, X. 2021.
\newblock A Cooperative-Competitive Multi-Agent Framework for Auto-bidding in
  Online Advertising.
\newblock \emph{CoRR}, abs/2106.06224.

\bibitem[{Zhou, Chakrabarty, and Lukose(2008)}]{10.1145/1367497.1367747}
Zhou, Y.; Chakrabarty, D.; and Lukose, R. 2008.
\newblock Budget Constrained Bidding in Keyword Auctions and Online Knapsack
  Problems.
\newblock In \emph{Proceedings of the 17th International Conference on World
  Wide Web}, WWW '08, 1243–1244. New York, NY, USA: Association for Computing
  Machinery.
\newblock ISBN 9781605580852.

\end{thebibliography}


\end{document}